\documentclass{article}
\usepackage{amssymb}
\usepackage{amsmath}
\usepackage{amssymb}
\usepackage{amsmath}
\usepackage{amsthm}
\usepackage{bm,bbm}
\usepackage{geometry}
\usepackage{graphicx}
\usepackage{caption}
\usepackage{enumerate}
\usepackage[numbers]{natbib}
\usepackage{algorithm}
\usepackage{algpseudocode}
\usepackage[colorlinks,CJKbookmarks=true,bookmarksnumbered,linkcolor=black,citecolor=black,plainpages=true,pdfstartview=FitH]{hyperref}
\theoremstyle{plain}
\newtheorem{theorem}{Theorem}

\newtheorem{lemma}[theorem]{Lemma}

\theoremstyle{definition}
\newtheorem{definition}[theorem]{Definition}

\theoremstyle{remark}

\usepackage[utf8]{inputenc}

\newcommand{\prob}{\mathbb{P}}
\newcommand{\E}{\mathbb{E}}

\newcommand{\indicator}{\mathbb{I}}

\newcommand{\real}{\mathbb{R}}

\title{Smoothed Analysis of Inconsistent A*}
\author{Zhiyang Chen, Hailong Yao}
\date{}

\begin{document}

\maketitle
\begin{abstract}
    The A* search is a fundamental path-finding algorithm in artificial intelligence. While admissible and consistent heuristics guarantee efficient performance by expanding each state at most once, modern search applications frequently employ powerful but inconsistent heuristics derived from machine learning, randomized evaluations, etc. A long-standing theoretical barrier to using these inconsistent heuristics is the risk of catastrophic node re-expansion, which yields a worst-case exponential time complexity of $\Omega(2^n)$. However, empirical observations contradict this pessimistic bound, demonstrating that inconsistent A* operates highly efficiently in practice. 

    To bridge this significant gap between theory and practice, this paper presents the first smoothed analysis of the A* algorithm using inconsistent heuristics. We model typical real-world noise by applying slight random perturbations to the edge weights of worst-case search graphs. Our main result proves that the expected smoothed time complexity of inconsistent A* is bounded by a polynomial, specifically a total iteration number of $O(n^2 m \kappa)$, where $n$ is the number of nodes, $m$ is the number of edges, and $\kappa$ controls the scale of random perturbations. Furthermore, we also show that this result naturally extends to the functionally equivalent problem of Dijkstra's algorithm on negative-weight graphs.
\end{abstract}

\section{Introduction}
The A* search algorithm stands as one of the most fundamental and widely utilized path-finding algorithms in artificial intelligence. Its efficiency is largely driven by its heuristic function, $h(\cdot)$, which estimates the cost from an arbitrary node to the goal node. It is a well-established property that if $h(\cdot)$ is {\em admissible} (i.e., it never overestimates the true cost), A* is guaranteed to return an optimal path. Furthermore, if $h(\cdot)$ is {\em consistent}, (i.e., the difference between the heuristic values of two adjacent nodes does not exceed the weight of the edge connecting both nodes), A* is guaranteed to expand each state at most once. Thus, using an admissible and consistent heuristic, A* finds the shortest path in at most $n-1$ iterations for $n$-node graphs.

However, many powerful heuristics derived in practice are inconsistent (sometimes even not admissible). These include heuristics generated by machine learning, randomized evaluations, and compressed pattern databases~\cite{felner11inconsistent,sakaue22sample}. The theoretical barrier to deploying inconsistent heuristics lies in the risk of node re-expansion. When a shorter path to a previously expanded node is discovered, A* must reopen that node. In a seminal result, \citet{martelli77on} demonstrated that this reopening process can cascade catastrophically, yielding a worst-case time complexity of $\Omega(2^n)$. Consequently, the prevailing theoretical consensus has historically cautioned against the use of inconsistent heuristics.

In contrast, \citet{felner11inconsistent} has argued that, in practice, inconsistent A* is generally not as inefficient as the worst-case complexity illustrates. Moreover, inconsistent heuristics are easy to create, adding a diversity of heuristic values, which may lead to a reduction of re-expansions. \citet{felner11inconsistent} suggests that ``the use of inconsistent heuristics will become an accepted and powerful tool in the development of high-performance search algorithms.''

Therefore, a large gap exists between theory and practice. It is natural to ask the following question: \textbf{Why do A* algorithms with inconsistent heuristics avoid exponential blow-up in practice?}

Standard algorithmic analysis frameworks struggle to answer this question. Worst-case analysis is overly pessimistic, considering highly artificial graph structures that are fragile and unlikely to occur in real-world domains. Conversely, average-case analysis typically requires rigid, often unrealistic assumptions about the probability distribution of the input graphs and heuristic values, failing to capture the structure of actual problem instances.

To bridge this gap, we turn to the framework of {\em smoothed analysis}, introduced by \citet{spielman04smoothed}, to explain the practical efficiency of the Simplex algorithm. Smoothed analysis elegantly interpolates between worst-case and average-case analysis. It measures the performance of an algorithm under slight random perturbations of worst-case inputs. If an algorithm exhibits polynomial smoothed complexity, it implies that the pathological instances inducing worst-case exponential behavior are mathematically fragile, which rarely exist due to the random noise in practice.

In this paper, we provide the first rigorous smoothed analysis of the A* algorithm using inconsistent heuristics. We assume the edge costs of the search graph are independently perturbed by small random noises. In expectation, the running time of the A* algorithm for smoothed instances will only be a polynomial in the graph size $n$ and the scale of randomness.

\subsection{Our contributions}
\noindent\textbf{Main results}. The main contribution of this work is the following theorem.
\begin{theorem}[Informally, Theorem~\ref{thm:main}]
    The smoothed time complexity of inconsistent A* is bounded from above by $\text{poly}(n,\kappa)$, where $n$ is the number of nodes in the search graph, and $\kappa$ is the scale of random perturbations.
\end{theorem}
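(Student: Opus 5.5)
The plan is to reduce A* with an inconsistent heuristic to Dijkstra-style label-correcting search on reduced costs, and then bound the number of distinct labels each node can receive under perturbation.

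\textbf{Step 1: reduction to reduced costs.} Write $c'(u,v)=c(u,v)-h(u)+h(v)$. A* orders nodes by $f=g+h$, which is exactly Dijkstra run on $c'$ with label $g'(v)=g(v)+h(v)-h(s)$. Inconsistency only means some $c'(u,v)$ are negative. Each expansion of a node $v$ happens with a strictly smaller $g(v)$ than its previous expansion. Moreover, every value $g(v)$ ever attained is the cost $c(P)$ of some simple $s$--$v$ path $P$; I would check this, since a tentative path containing a cycle would already have been improved when the cycle closed.

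\textbf{Step 2: a structural monotonicity lemma.} The aim is to charge each re-expansion of $v$ to a configuration whose count we can bound. I would prove that when $v$ is expanded for the $k$-th time with path $P_k$, the $f$-value of the expansion is determined by a \emph{cut structure}. The hoped-for statement is that the sequence of expansion $f$-values is nondecreasing between certain "resets", and each reset corresponds to an edge $e$ whose reduced cost is negative and which is being relaxed for a new reason.

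A cleaner route is to parametrize. Consider the paths $P_1,P_2,\dots$ to $v$, with $c(P_1)>c(P_2)>\cdots$. Each consecutive pair differs, and by the ordering rule of A* the newer path must beat the older one in $c$ while having been popped later in $f$. I would aim to show that, after grouping paths by their last edge, the number of expansions of $v$ is at most the number of indices $k$ for which the gap $c(P_k)-c(P_{k+1})$ is "small", namely below some $\epsilon$ relative to the total range of path costs.

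\textbf{Step 3: probabilistic counting, in the style of smoothed shortest-path and 2-opt analyses.} Let each edge weight have density bounded by $\kappa$ (for example uniform noise), with costs normalized to $[0,1]$. All path costs then lie in $[0,n]$, so a strictly decreasing chain with every gap at least $\epsilon$ has length at most $n/\epsilon+1$. It remains to bound the expected number of gaps below $\epsilon$. For two distinct simple paths $P\neq Q$, the difference $c(P)-c(Q)$ involves some edge $e$ in their symmetric difference. Conditioning on all other weights, $\Pr[|c(P)-c(Q)|<\epsilon]\le 2\epsilon\kappa$. There are exponentially many path pairs, so a union bound over all of them fails. Instead I would use the isolating-edge trick: for each edge $e$ and node $v$, define the event that the best path to $v$ using $e$ and the best path avoiding $e$ have costs within $\epsilon$. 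These are only $nm$ events, each with probability at most $2\epsilon\kappa$. I would then show that every small gap in Step 2 is witnessed by one of these events, possibly with a factor $n$ coming from prefixes.

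\textbf{Step 4: combine and optimize.} This gives $\E[\#\text{expansions of }v]\le n/\epsilon+\text{poly}(n,m)\,\epsilon\kappa\cdot(\text{scale})$. Taking $\epsilon$ appropriately, or integrating $\epsilon$ over dyadic scales, should yield $O(nm\kappa)$ per node and $O(n^2m\kappa)$ in total, matching the abstract.

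\textbf{Main obstacle.} The hard step is the witnessing claim in Step 3. Consecutive expanded paths to $v$ need not be of the form "best with $e$ versus best without $e$", so isolating events must be defined relative to the A* dynamics, for instance as the best path among those available by time $t$. The difficulty is to do this while keeping the conditioning on all weights except $e$ independent of $e$'s weight. My first attempt would define the events through the negative-reduced-cost edges of $c'$ and the order of their relaxations.
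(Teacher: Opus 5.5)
\textbf{There is a genuine gap.} It sits where you flag it: Step 2 is only a hoped-for statement, and the witnessing claim in Step 3 cannot hold for the events you propose. You define only $nm$ events (best $s$--$v$ path through $e$ versus best path avoiding $e$, within $\epsilon$), and none of them carries a length threshold. Each event occurs at most once, so suppose every small gap were witnessed with multiplicity at most $n$, or any polynomial. Taking $\epsilon>n$ makes every gap small. You would then get a deterministic polynomial bound on the number of intermediate paths per node, for every weight vector. That contradicts Martelli's exponential instances. So the ``$n/\epsilon$ large gaps plus small gaps'' trade-off is the wrong accounting. What you need is a family of events indexed by the length interval as well, whose expected total count does not depend on $\epsilon$.

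There is also a conditioning problem. If the edge $e$ in the symmetric difference is not the last edge into $v$, its weight changes the $g$-values of later prefixes. Those $g$-values affect the order in which A* pops nodes. So the competing paths are not fixed after you condition on the other weights.

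The missing idea is a static, dynamics-free description of the intermediate paths. The paper proves (Lemma~\ref{lmm:pareto}) that if $P$ is an intermediate path of $v$, then every other $s$--$v$ path $P'$ satisfies $g(P)<g(P')$ or $P\prec P'$. Here $\prec$ compares the maximum $f=g+h$ value over the non-shared nodes, excluding $v$ itself (Definition~\ref{def:bottleneck}). Hence the intermediate paths of $v$ form a Pareto chain in (length, bottleneck). The bottleneck of a path whose last edge is $e=(u,v)$ does not involve $w_e$.

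This yields the right events. For each $v$, each $e=(u,v)$ and each $k$:
\begin{itemize}
\item let $P_{v,e,k}$ be the minimum-bottleneck path avoiding $e$ with length $<k\varepsilon$;
\item let $Q_{v,e,k}$ be the shortest path through $e$ with $Q_{v,e,k}\prec P_{v,e,k}$.
\end{itemize}
After fixing all weights except $w_e$, both paths are determined and $g(Q_{v,e,k})$ equals a constant plus $w_e$. So $g(Q_{v,e,k})$ lands in $(k\varepsilon,(k+1)\varepsilon)$ with probability at most $\varepsilon\kappa$. Each update at $v$ that changes the last edge is charged to a distinct such event. This gives $\E[U_{\mathbf{dir}}]\le m\cdot(n/\varepsilon)\cdot\varepsilon\kappa=nm\kappa$, with $\varepsilon$ taken small enough that each interval holds at most one path length.

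Updates that keep the last edge cannot be charged this way, because both paths contain $e$. The paper handles them separately through $U_{\mathbf{ind}}\le(n-1)U_{\mathbf{dir}}$ (Lemma~\ref{lmm:indirect}), which supplies the final factor $n$. Your plan has neither the Pareto/bottleneck characterization nor the direct/indirect split. Without them, Step 3 cannot make the witnessing events both independent of $w_e$ and fine enough to be charged injectively.
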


We also empirically verify the theoretical analysis by numerical experiments. We find that, the worst-case exponential-time instances for A* constructed by \citet{martelli77on} are highly sensitive in practice. The total number of iterations greatly reduces as we increase the perturbation on the instances.

Moreover, it is widely known that there is an equivalence between the A* algorithm and the Dijkstra algorithm on graphs with negative weights. Our analysis also works for Dijkstra. We show that, for a shortest path instance with no negative cycles, if the edge weights are randomly perturbed, Dijkstra finds the shortest path in polynomial time.

\noindent\textbf{Our techniques}. The main difficulty in proving the smoothed complexity is to find enough independent randomness among the intermediate states of inconsistent A*. Previous work on smoothed analysis of heuristic local search algorithms (e.g., \citet{englert16smoothed}) generally uses union bounds over all possible solution improvements to induce a polynomial smoothed complexity bound. However, this does not work for inconsistent A* since there could be an exponential number of path updates. The main proof technique in our analysis is to establish a surprising connection between inconsistent A* and Pareto optimization (a.k.a. multi-objective optimization). Then, we could apply the ``winner gap'' technique by \citet{beier04typical} for the smoothed analysis of Pareto frontiers. We make various modifications to this technique to adapt to our problem.

\subsection{Related work}
\noindent\textbf{Theoretical analysis of A*}. \citet{felner11inconsistent} has shown that the total number of iterations of inconsistent A* is bounded from above by $O(n^2 W)$, where $W$ is the maximum of edge weights. However, their analysis only works for integer weights. Why inconsistent A* with real weights is efficient remains a significant theoretical question.

\citet{eden22embeddings} studies the average-case running time of A* with heuristics generated by embedding and labeling schemes. However, their analysis focuses on consistent heuristics. \citet{sakaue22sample} studies the sample complexity of learning an arbitrary heuristic for an A* instance distribution. In their analysis, since the learned heuristic could be inconsistent and inadmissible, they bound the sample complexity by counting all possible paths from the start node to other nodes, which leads to exponential bounds. This indicates that bounding the time complexity of inconsistent A* could be of interest for learning-theoretic problems.

\noindent\textbf{Smoothed analysis of algorithms}. Originally proposed by \citet{spielman04smoothed} to study the time complexity of the Simplex method for linear programming, smoothed analysis is widely used to explain the practical efficiencies of a variety of algorithms, including local search~\cite{englert16smoothed,chen20smoothed}, algorithm configuration~\cite{balcan18dispersion,chen25learning}, online algorithms~\cite{coester25smoothed}, and machine learning~\cite{arthur09kmeans}. We refer readers to \citet{Roughgarden21beyond} for a comprehensive introduction.

\section{Preliminaries}
\noindent\textbf{Problem formulation}. We use $G=(V,E)$ to denote a weighted graph with $|V|=n$ nodes and $|E|=m$ edges. Each edge $(u,v)\in E$ is associated with a real weight $w_{(u,v)}\in [0,1]$. Without loss of generality, we assume that edge weights are bounded, since we can always normalize the weights by the maximum. In our analysis, we consider a directed graph, but our approach also works for undirected graphs. For simplicity of analysis, we assume the graph is simple, i.e., there is at most one edge from node $u$ to $v$ for any pair $(u,v)$, and no self-loops are allowed. Let $s\in V$ denote the start node and $t\in V$ denote the goal node. Moreover, we have a heuristic function $h:V\to [0,+\infty)$ that estimates the shortest path from each node to the goal. The target of the A* algorithm is to find a path $s\to v_1\to v_2\to\dots\to v_K\to t$ that minimizes the total weight $w_{(s,v_1)}+\sum_{i=1}^{K-1}w_{(v_i,v_{i+1})}+w_{(v_K,t)}$.

\noindent\textbf{Inconsistent A*}. The A* algorithm is illustrated in Algorithm~\ref{alg:a-star}. In each iteration, the algorithm finds an active node $v$ in $\text{OPEN}$ with the minimum $f(v)=g(v)+h(v)$, and updates the $g$ value of adjacent nodes if a shorter path is discovered using $v$. Then we add the updated nodes to $\text{OPEN}$ for further updates.

Note that since we consider inconsistent heuristics, Algorithm~\ref{alg:a-star} incorporates the reopening process. If the heuristic is not admissible, Algorithm~\ref{alg:a-star} cannot find the shortest path even with reopening. In this case, we could remove lines 4--6 of Algorithm~\ref{alg:a-star}, and run the algorithm until convergence. This guarantees optimality, but leads to a longer running time. In fact, our analysis does not rely on the admissibility condition. This means that it also works for this special case, although it is rarely used in practice.

\noindent\textbf{Smoothed analysis model}. In smoothed analysis, we assume the problem instance is randomly perturbed by a small noise. In our model, the weight of each edge $w_{(u,v)}$ is not a fixed real value, but a probability distribution supported on $[0,1]$. Given a parameter $\kappa$, the probability density function of each edge weight $w_{(u,v)}$ is a bounded function $p_{(u,v)}:[0,1]\to [0,\kappa]$. The input of the algorithm is obtained by independently sampling each edge weight $w_e$ based on the density function $p_e$. We say that such instances are {\em $\kappa$-smoothed}. Let $T(I)$ denote the running time of the algorithm on instance $I$, let $I$ denote an input, and let $\tilde{I}$ denote a $\kappa$-smoothed problem instance. The smoothed time complexity of an algorithm is defined by $$\max_{\tilde{I}}\mathop{\E}_{I\sim\tilde{I}}[T(I)],$$ where $\tilde{I}$ is taken over all $\kappa$-smoothed instances with $n$ nodes. Note that our model only perturbs the edge weights, but the heuristic values could be arbitrary.

For example, if $\kappa=1$, each edge weight is a uniform distribution over $[0,1]$, and our model reduces to average-case analysis. If $\kappa\to +\infty$, our model converges to worst-case analysis.

In our model, the heuristic function $h:V\to [0,+\infty)$ is a fixed vector. It may depend on the graph structure and edge weights, but it is independent of the random perturbations. Therefore, given a $\kappa$-smoothed instance, we can synthesize the heuristic based on the instance (e.g., the distribution of edge weights), but it should be generated before sampling the weights.

\noindent\textbf{On tie-breaking}. Note that in the smoothed analysis setting, for any two paths that differ by at least one node, with probability 1, the total lengths (and the $f$-values as well) of these two paths are different. Thus, in line 3 of Algorithm~\ref{alg:a-star}, we break ties arbitrarily without affecting our analysis.

\begin{algorithm}[htb]
\caption{Inconsistent A* with reopening.}
\label{alg:a-star}
\begin{algorithmic}[1] 
\Require A directed weighted graph $G=(V,E)$ with $|V|=n$ nodes and $|E|=m$ edges, where each edge $e=(u,v)\in E$ is associated with a real weight $w_e\in [0,1]$; A heuristic function $h:V\to[0,+\infty)$; The start node $s\in V$ and the goal node $t\in V$.
\Ensure A simple path $v_1,\dots,v_k$ where $v_1=s$ and $v_k=t$.
\State Let $\text{OPEN}=\{s\},\text{CLOSED}=\emptyset$ and $g(s)=0$;
\While {$\text{OPEN}\ne\emptyset$}
    \State Let $v\leftarrow \arg\min_{v\in\text{OPEN}}g(v)+h(v)$;
    \If {$v=t$}
        \State \Return the $s$-$t$ path by tracing $t\leftarrow p(t)\leftarrow p(p(t))\leftarrow\dots\leftarrow s$;
    \EndIf
    \For {each node $v^\prime$ s.t. $(v,v^\prime)\in E$}
        \State Let $g^\prime\leftarrow g(v)+w_{(v,v^\prime)}$;
        \If {$v^\prime\notin\text{OPEN}\cup\text{CLOSED}$}
            \State Let $g(v^\prime)\leftarrow g^\prime, p(v^\prime)\leftarrow v$, and $\text{OPEN}\leftarrow\text{OPEN}\cup \{v^\prime\}$;
        \ElsIf {$v^\prime\in\text{OPEN}$ and $g^\prime<g(v^\prime)$}
            \State Let $g(v^\prime)\leftarrow g^\prime, p(v^\prime)\leftarrow v$;
        \ElsIf {$v^\prime\in\text{CLOSED}$ and $g^\prime<g(v^\prime)$}
            \State Let $g(v^\prime)\leftarrow g^\prime, p(v^\prime)\leftarrow v$;
            \State Move $v^\prime$ from $\text{CLOSED}$ to $\text{OPEN}$;
        \EndIf
    \EndFor
    \State Move $v$ from $\text{OPEN}$ to $\text{CLOSED}$;
\EndWhile
\end{algorithmic} 
\end{algorithm}

\section{Smoothed Analysis}
Our main result is the following theorem.
\begin{theorem}\label{thm:main}
    The expected number of iterations of the A* algorithm for a $\kappa$-smoothed instance is bounded from above by $O(n^2m\kappa)$.
\end{theorem}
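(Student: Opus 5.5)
We bound the number of times each node $v$ is expanded and then sum over $v$. Each expansion of $v$ after the first happens because $g(v)$ strictly decreased. At any moment $g(v)$ is the length $w(P)$ of some $s$-$v$ path $P$, obtained by tracing parent pointers.

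\textbf{Step 1: a Pareto characterization.} We first argue that every $g$-value carried by $v$ at an expansion is the length of a path $P$ that is Pareto-optimal with respect to two objectives. The first objective is the length $w(P)$. The second is a scalar determined by the $f$-order of the expansions along $P$, namely $\phi(P)=\max_{u\in P}\bigl(w(P_{s\to u})+h(u)\bigr)$, the largest $f$-value of a prefix of $P$. The intuition is that A* expands nodes in order of $f$, and the frontier behaves like a bottleneck (minimax) shortest path computation. A path $P$ to $v$ is expanded before a path $P'$ with smaller length only if $P$ has a smaller bottleneck $\phi$.

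More precisely, the claim is that the successive paths realizing $g(v)$ at expansion times have strictly decreasing $w$ and nondecreasing $\phi$. This should follow from an induction on iterations: any node popped has the minimum $f$ in OPEN, so the current $f$-threshold is monotone up to the reopened values. Hence the number of expansions of $v$ is at most the number of Pareto-optimal $s$-$v$ paths under the objective pair $(\phi,w)$. I expect this structural lemma to be the main obstacle. The threshold is not globally monotone under reopening, so I would first try the objective $\phi$ exactly as defined above, and otherwise replace it with an objective built from the iteration index at which each prefix node was expanded.

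\textbf{Step 2: counting Pareto points by a winner-gap argument.} Fix $v$ and partition $[0,n]$, the range of $w$ over simple paths, into intervals of width $\varepsilon$. The number of Pareto points is at most one plus the number of intervals $[x,x+\varepsilon)$ whose right neighborhood contains a Pareto point. For each threshold $x$, let $P^\ast$ be the path minimizing $\phi$ among paths with $w\le x$ (the winner). Let $\Delta_x$ be the gap from $x$ to the smallest $w(Q)$ exceeding $x$ among paths $Q$ with $\phi(Q)<\phi(P^\ast)$ (the loser).

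Following \citet{beier04typical}, it then suffices to show $\Pr[\Delta_x<\varepsilon]\le O(m\kappa\varepsilon)$. To do so, fix an edge $e$ and condition on all weights except $w_e$. The key point is that $\phi$ depends on $w_e$ only through a shift for paths containing $e$, and $h$ is independent of the weights. Then, as in Beier--Vöcking, the loser's length is determined up to a translate by $w_e$, which has density at most $\kappa$. Distinguishing whether the winner and the loser contain $e$ yields a factor of $m$ from a union bound over the edges where they differ.

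\textbf{Step 3: summing up.} Letting $\varepsilon\to0$, we get $\E[\#\text{Pareto points for }v]\le 1+\sum_x \Pr[\Delta_x<\varepsilon]\le 1+ (n/\varepsilon)\cdot O(m\kappa\varepsilon)=O(nm\kappa)$. Summing over the $n$ nodes bounds the total number of iterations by $O(n^2m\kappa)$.

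One subtlety is that the loser and winner may not be simple paths. We handle this by noting that A* with nonnegative weights only ever realizes simple parent-pointer paths, so we restrict attention to simple paths throughout.
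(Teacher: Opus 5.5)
There is a genuine gap, and it is in the structural lemma of Step~1 that you already flagged: it is false for your scalar $\phi$, and not because of measure-zero ties. Because $\phi(P)$ maximizes over \emph{all} prefixes, two paths that share a prefix node with a large $f$-value get identical $\phi$ for every weight vector. Take edges $s\to a$, $a\to v$, $a\to b$, $b\to v$, $v\to t$ with $h(a)=10$, $h(b)=5$, $h(t)=100$, $h(s)=h(v)=0$. A* expands $s$, then $a$, then $v$ via $s\,a\,v$, then $b$. Whenever $w_{ab}+w_{bv}<w_{av}$ (probability $1/6$ already for uniform weights), it reopens $v$ and expands it again via $s\,a\,b\,v$. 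Yet both paths have $\phi=w_{sa}+10$. So the Pareto set under $(\phi,w)$ is a single path while $v$ is expanded twice. In Step~2 the loser set $\{Q:\phi(Q)<\phi(P^\ast)\}$ is empty, so the winner-gap count misses the re-expansion. Even granting that $\phi$ is nondecreasing along the sequence, this cannot bound the number of expansions.

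The missing idea is the paper's bottleneck comparison $\prec$ (Definition~\ref{def:bottleneck}). It compares $P$ and $P'$ only through the $f$-values of internal nodes \emph{after their longest common prefix}, excluding $v$. This is a genuinely pairwise order, strict almost surely and transitive. Lemma~\ref{lmm:pareto} then shows that successive intermediate paths of $v$ satisfy $P_1\prec P_2\prec\dots$ with strictly decreasing $g$. Your fallback, an objective built from expansion indices, depends on the whole run and hence on every weight, so it cannot be held fixed under deferred decisions.

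Step~2 has a second, independent problem. For an edge $e$ in the interior of $P$, $\phi(P)=\max(A_P,B_P+w_e)$, where $A_P$ collects the prefixes not containing $e$; this is not a shift in $w_e$. Even a pure shift would not suffice. Beier--V\"ocking needs the non-random criterion to be unaffected by the resampled weight, whereas here both the winner and the set of admissible losers change as $w_e$ varies. So the loser's length is not ``fixed path plus $w_e$.''

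The paper avoids this by resampling only the weight of the \emph{last} edge $e=(u,v)$, which enters no comparison between $s$-$v$ paths. Then $P_{v,e,k}$ and $Q_{v,e,k}$ are fixed, $G_{v,e,k}$ is a constant plus $w_e$, and $\prob[G_{v,e,k}\in I_k]\le\varepsilon\kappa$. The price is that this only charges updates whose last edge changes (direct updates), since only then does the winner avoid $e$. Updates that keep the last edge are bounded separately by Lemma~\ref{lmm:indirect}: $U_{\mathbf{ind}}\le(n-1)U_{\mathbf{dir}}$. The final count is therefore $\E[U_{\mathbf{dir}}]\le\sum_v\deg^-(v)\,n\kappa=nm\kappa$ times a factor $n$, not a per-node union bound over all edges in the symmetric difference. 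Your route has no substitute for this direct/indirect decomposition.
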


We first introduce some notation for the proof of Theorem~\ref{thm:main}.

\noindent\textbf{Notation}. In the A* algorithm, $g(u)$ is used to denote the path length from the start node to any node $u$. This depends on the current state of the algorithm. In our analysis, to avoid this dependency, with a little abuse of notation, for any path $P$, we use $g(P)$ to denote the length of $P$. Moreover, all paths in our analysis are simple paths unless otherwise stated.

\subsection{Direct and indirect updates}
Before diving into the proof of Theorem~\ref{thm:main}, we define direct and indirect updates in Algorithm~\ref{alg:a-star}.

For each node $v\in V$, the A* algorithm maintains two state variables: $g(v)$, the cost of the currently known shortest path from $s$ to $v$; and $p(v)$, a parent pointer indicating the immediate ancestor of $v$ in $T$. By tracing the parent pointers, the A* algorithm implicitly maintains a path tree, denoted by $T$, that originates at $s$. The path from $s$ to each node on $T$ represents the current path found by the A* algorithm. We call such paths {\em intermediate paths}.

In lines 8--16 of Algorithm~\ref{alg:a-star}, we select a pair of nodes $(v,v^\prime)$, and update $g(v^\prime)\leftarrow\min(g(v^\prime),g(v)+w_{(v,v^\prime)})$. Depending on whether the topology of the path tree $T$ is altered, we formalize these updates into two distinct categories: Direct and indirect updates. See Figure~\ref{fig:update} for an example.

\noindent\textbf{Direct updates}. A direct update occurs when the algorithm discovers a structurally novel path from $s$ to $v$ that is strictly superior to its previous intermediate path. This kind of update alters the topological structure of the path tree $T$.

Suppose the algorithm is expanding a node $v\in V$ and evaluating its outgoing edge $(v,v^\prime) \in E$. A direct update is triggered if and only if the path traversing through $v$ offers a lower cost than the currently recorded cost $g(v^\prime)$. Therefore, before the update, we have $p(v^\prime)\ne v$. After the update, the edge $(p(v^\prime), v^\prime)$ is pruned from $T$, and the new edge $(v, v^\prime)$ is added into $T$.

\noindent\textbf{Indirect updates}. An indirect update occurs when a node $v$ experiences a reduction in its path cost $g(v)$ strictly as a byproduct of a cost reduction in its current ancestor, without any alteration to the path tree's topology. Thus, the intermediate path of $v$ remains unchanged.

Let $v^\prime\in V$ be a node that has already been updated, and let $v=p(v^\prime)$ be its current parent in $T$. If the path cost $g(v)$ undergoes an update (i.e., $g(v)$ is decreased) due to a direct or indirect update upstream, the cost of the path from $s$ to $v^\prime$ is inherently reduced, despite the local edge $(v, v^\prime)$ in $T$ remaining unchanged. 
\begin{figure}
    \centering
    \includegraphics[width=0.5\linewidth]{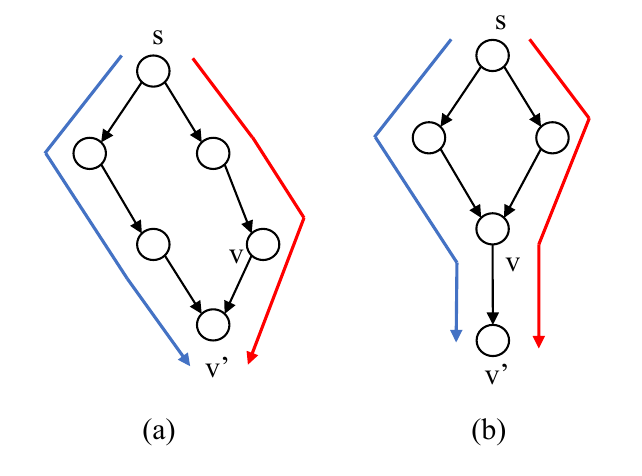}
    \caption{(a) Direct updates; (b) Indirect updates. Blue paths denote original paths, and red paths denote updated paths.}
    \label{fig:update}
\end{figure}

\begin{lemma}\label{lmm:indirect}
    For any A* instance, let $U_{\mathbf{dir}}$ and $U_{\mathbf{ind}}$ denote the number of direct and indirect updates, respectively. We have $$U_{\mathbf{ind}}\le (n-1)\cdot U_{\mathbf{dir}}.$$
\end{lemma}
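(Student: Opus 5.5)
We charge every indirect update to the direct update that caused it. The paper's notion of an indirect update is conceptual: Algorithm~\ref{alg:a-star} never writes $g$ for descendants. We therefore count updates in the accounting where each change of $g(v)$ along the tree propagates to the whole subtree of $T$ rooted at $v$. A direct update at a node $v'$ replaces the intermediate path of $v'$ by a strictly cheaper one. Every node $u$ in the subtree of $T$ rooted at $v'$ (excluding $v'$ itself) keeps the same tree edges below $v'$, and its path cost drops by the same amount. These cost reductions are exactly the indirect updates triggered by the direct update at $v'$.

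The key step is to verify that indirect updates arise only in this way. By definition, an indirect update at $u$ is a decrease of $g(u)$ with no change to $p(u)$ or to the tree topology. Such a decrease must come from a decrease of $g(p(u))$, which is itself either a direct update at $p(u)$ or an indirect update. Following parent pointers upward, each chain of indirect updates therefore terminates at a unique direct update at some ancestor $v'$. This gives a map from indirect updates to direct updates. The map is well defined because the topology is frozen between direct updates, so the ancestor $v'$ is determined.

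Next we bound the fiber of this map. For a fixed direct update at $v'$, the nodes that receive an indirect update through it are the proper descendants of $v'$ in $T$ at that moment, and each is updated once. Since $T$ is a tree on at most $n$ nodes rooted at $s$, and $v'\neq s$ ($s$ always has $g(s)=0$ and is never directly improved, as weights are nonnegative), the subtree of $v'$ minus $v'$ has at most $n-2 \le n-1$ nodes. Summing over all direct updates gives $U_{\mathbf{ind}}\le (n-1)U_{\mathbf{dir}}$.

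The main obstacle is ensuring the tree is genuinely acyclic, so that no node is counted twice in the subtree propagation. A direct update could make $v$ the parent of one of its own ancestors, creating a cycle. This cannot occur with nonnegative weights when the improvement is strict: if $v'$ were an ancestor of $v$, then $g(v)+w_{(v,v')}\ge g(v')$, contradicting $g'<g(v')$. I would check this invariant inductively, and then the counting above completes the proof.
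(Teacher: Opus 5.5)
Your proposal is correct and follows essentially the same route as the paper: you map each indirect update back along parent pointers to the unique direct update that triggered it, then bound each fiber by the size of the subtree of $T$ below the directly updated node. Your explicit checks that a strict improvement with nonnegative weights can never close a cycle in $T$, and that $s$ is never directly updated, make rigorous what the paper simply asserts when it calls $T$ a tree rooted at $s$.
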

\begin{proof}
    We define an update event as the event that the recorded path cost $g(v)$ of any node $v\in V$ is strictly decreased. The set of all update events $\mathcal{E}$ during the entire execution can be partitioned into two disjoint subsets: the set of direct updates $\mathcal{E}_d$ and the set of indirect updates $\mathcal{E}_i$, such that $U_{\mathbf{dir}}=|\mathcal{E}_d|$ and $U_{\mathbf{ind}}=|\mathcal{E}_i|$.

    By the definition of an indirect update, an event $e \in \mathcal{E}_i$ occurring at a node $v$ strictly preserves its parent pointer $p(v)$. Therefore, this reduction in $g(v)$ must be a mathematical consequence of a strictly decreasing update event $e^\prime$ at its parent node $u=p(v)$. We denote this causal trigger relationship as $e^\prime \to e$.

    Since the search structure $T$ maintained by A* is a directed tree rooted at the start node $s$ (which is never updated), and all update events strictly decrease $g$-values, any causal chain of updates $e \gets e^\prime \gets e^{\prime\prime} \gets \dots$ must be finite and cycle-free. 

    Consequently, tracing any indirect update event $e \in \mathcal{E}_i$ backward along this causal chain must ultimately terminate at an update event that is not an indirect update. This terminating root event must be a direct update $e_d\in \mathcal{E}_d$ occurring at some ancestor node $r$. This establishes a surjective mapping $f: \mathcal{E}_i \to \mathcal{E}_d$, meaning every indirect update is uniquely rooted in exactly one direct update.

    To prove the lemma, it suffices to bound the number of indirect updates causally generated by a single direct update. Let $e_d\in\mathcal{E}_d$ be a fixed direct update event occurring at node $r$. We define its causal progeny $S(e_d) = \{e\in \mathcal{E}_i\,|\,f(e)=e_d\}$.

    When $e_d$ occurs, it establishes a newly reduced cost $g(r)$ and a new parent $p(r)$. This specific cost reduction propagates downstream along the directed edges of $T$. Recall that the causal chain is cycle-free. There is no node that will be indirectly updated twice by a fixed direct update. Since the maximum number of nodes in any subtree of $T$ is strictly bounded by $n-1$, the maximum number of indirect updates causally triggered by $e_d$ is at most $n-1$. Thus, we have $$|S(e_d)|\le n-1,\,\forall\, e_d\in\mathcal{E}_d,$$ which proves this lemma.
\end{proof}
Therefore, in the following analysis, it suffices to bound $U_{\mathbf{dir}}$ to prove the desired result, since each iteration is triggered by at least one previous update on some node.

\subsection{Pareto conditions of node expansion}
The inconsistent A* algorithm may iterate more than $n$ times, since some nodes may be expanded more than once. The key idea of analyzing the time complexity of inconsistent A* is to characterize the conditions of node re-expansion.

Suppose node $v^\prime$ is re-expanded. Originally, the A* algorithm finds a path from $s$ to $v^\prime$. We denote it by $P_{\text{old}}$. Later, an edge relaxation on some edge $(v,v^\prime)\in E$ finds a new path $s\to\dots\to v\to v^\prime$, namely $P_{\text{new}}$, such that its total weight is smaller than $P_{\text{old}}$.

If the heuristic function $h(\cdot)$ is admissible and consistent, since $g(P_{\text{old}})>g(P_{\text{new}})$, the A* algorithm will find $P_{\text{new}}$ instead of $P_{\text{old}}$, so that re-expansion does not happen. However, if $h(\cdot)$ is inconsistent, we can manipulate $h(\cdot)$ so that $P_{\text{new}}$ has not been discovered yet when $P_{\text{old}}$ is found. Concretely, let $P_{\text{new}}$ be a path $s\to\dots\to u\to\dots\to v^\prime\to v$, where $h(u)$ is extremely large such that $u$ is in the $\text{OPEN}$ list, waiting to be expanded, while $P_{\text{old}}$ has been discovered.

Formally, we define the bottleneck comparison between paths.
\begin{definition}[Bottleneck comparison]\label{def:bottleneck}
    Fix a node $v\in V$. Let $P$ and $P^\prime$ be two different paths from $s$ to $v$, where $P=s\to u_1\to \dots \to u_k\to u_{k+1} \to \dots\to  u_K\to v$, and $P^\prime=s\to u_1\to \dots \to u_k\to u^\prime_{k+1} \to \dots\to u^\prime_{K^\prime}\to v$, with the first $k$ edges being the same ($k$ could be 0), i.e., the longest common prefix of $P$ and $P^\prime$. We say $P\prec P^\prime$ if \begin{equation*}
        \max_{i\in \{k+1,\dots,K\}} g(P(u_i))+h(u_i)<\max_{i\in\{k+1,\dots,K^\prime\}} g(P^\prime(u^\prime_i))+h(u^\prime_i),
    \end{equation*} where $P(u_i)$ is the prefix of $P$ from $s$ to $u_i$ and $P^\prime(u^\prime_i)$ is the prefix of $P^\prime$ from $s$ to $u^\prime_i$.
\end{definition}
Note that since $P$ differs $P^\prime$ by at least one edge, in the smoothed analysis setting, with probability 1, $$\max_{i\in \{k+1,\dots,K\}} g(P(u_i))+h(u_i)\ne\max_{i\in\{k+1,\dots,K^\prime\}} g(P^\prime(u^\prime_i))+h(u^\prime_i).$$ Therefore, in our definition, we only consider the strict comparison. Moreover, $\prec$ is a total order.
\begin{lemma}
    For any paths $P_1,P_2,P_3$, we have $$(P_1\prec P_2) \land (P_2\prec P_3)\implies P_1\prec P_3.$$
\end{lemma}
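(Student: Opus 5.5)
The plan is to reduce the comparison to a single monotone quantity attached to each path, and then split into cases according to how the three paths branch from one another. For a path $P=s\to u_1\to\dots\to u_K\to v$ and an index $j\in\{0,\dots,K\}$, define
\begin{equation*}
a_P(j)=\max_{i\in\{j+1,\dots,K\}} g(P(u_i))+h(u_i),
\end{equation*}
with the convention $a_P(K)=-\infty$ (empty maximum). Two facts are immediate. First, $a_P(j)$ is nonincreasing in $j$. Second, if $P$ and $P'$ share their first $\ell$ edges, then $g(P(u_i))=g(P'(u_i))$ for every shared node $u_i$, because the prefixes coincide. Hence for every $j<\ell$ we can write $a_P(j)=\max(F,a_P(\ell))$ and $a_{P'}(j)=\max(F,a_{P'}(\ell))$ with the same constant $F=\max_{j<i\le \ell}\bigl(g(P(u_i))+h(u_i)\bigr)$. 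By Definition~\ref{def:bottleneck}, $P\prec P'$ means exactly $a_P(k)<a_{P'}(k)$, where $k$ is the length of their longest common prefix.

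Let $k_{12},k_{23},k_{13}$ be the longest-common-prefix lengths of the pairs $(P_1,P_2)$, $(P_2,P_3)$ and $(P_1,P_3)$. Since these are prefixes of paths starting at $s$, they satisfy an ultrametric property: the two smallest of the three values are equal, and the third is at least as large. This leaves four cases.

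\textbf{Case 1: all three are equal to $k$.} The hypotheses read $a_1(k)<a_2(k)<a_3(k)$, so $P_1\prec P_3$ directly.

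\textbf{Case 2: $k_{12}>k_{13}=k_{23}=k$.} From $a_1(k_{12})<a_2(k_{12})$ and the $\max(F,\cdot)$ representation, $a_1(k)\le a_2(k)<a_3(k)$.

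\textbf{Case 3: $k_{23}>k_{12}=k_{13}=k$.} Symmetrically, $a_1(k)<a_2(k)\le a_3(k)$.

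\textbf{Case 4: $k_{13}>k_{12}=k_{23}=k$.} This is the main obstacle, because the target comparison is made at a deeper index $k_{13}$ than either hypothesis. We still obtain $a_1(k)<a_3(k)$ by chaining. Now write $a_1(k)=\max(F,a_1(k_{13}))$ and $a_3(k)=\max(F,a_3(k_{13}))$. If $a_3(k_{13})\le F$, then $a_3(k)=F\le a_1(k)$, which is a contradiction. Hence $a_3(k_{13})=a_3(k)>a_1(k)\ge a_1(k_{13})$, which is exactly $P_1\prec P_3$.

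It remains to check the degenerate situations, where one path is a prefix of another or a suffix is empty. The $-\infty$ convention handles these, and since all paths end at the same node $v$ and are distinct, no common prefix can cover an entire path except in ways this convention already absorbs. I would also note that strictness is preserved throughout, and that equality between the relevant maxima occurs with probability zero in the smoothed setting.
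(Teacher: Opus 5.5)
Your proof is correct and follows essentially the same route as the paper: a case split on the longest-common-prefix lengths, using that $f$-values on a shared prefix coincide so each bottleneck splits as $\max(F,\cdot)$. Your Case 4 ($k_{13}$ strictly largest) is the nontrivial subcase of the paper's Case 3, where the paper only asserts that removing the common nodes preserves the strict inequality; your observation that $a_3(k_{13})\le F$ would force $a_3(k)\le a_1(k)$ supplies the justification the paper leaves out.
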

\begin{proof}
    Let $K_1,K_2,K_3$ denote the number of nodes in $P_1,P_2,P_3$ except the first and the last nodes, respectively. Let $k_1$ denote the length of the longest common prefix (LCP) of $P_1$ and $P_2$, $k_2$ denote the LCP of $P_2$ and $P_3$, and $k_3$ denote the LCP of $P_1$ and $P_3$. For simplicity, we use $f_j(u)=g(P_j(u))+h(u)$ to denote the $f$-value on path $P_j$. We use $u_i$ to denote the $i$-th node on the paths $P_1,P_2$ or $P_3$ (we omit the path index for simplicity).
    
    Altogether, there are three cases:

    \noindent\textbf{Case 1}: $k_1<k_2$. Since $P_2$ and $P_3$ share more common prefix edges than $P_1$ and $P_2$, it must be $k_3=k_1$. By $P_1\prec P_2$, we have $$\max_{i\in\{k_1+1,\dots,K_1\}}f_1(u_i)<\max_{i\in\{k_1+1,\dots,K_2\}}f_2(u_i).$$ By $P_2\prec P_3$, we have \begin{align*}
        \max_{i\in\{k_1+1,\dots,K_2\}}f_2(u_i)&=\max\left\{\max_{i\in\{k_1+1,\dots,k_2\}}f_2(u_i),\max_{i\in\{k_2+1,\dots,K_2\}}f_2(u_i)\right\}\\
        &\le \max\left\{\max_{i\in\{k_1+1,\dots,k_2\}}f_3(u_i),\max_{i\in\{k_2+1,\dots,K_3\}}f_3(u_i)\right\}\\
        &=\max_{i\in\{k_3+1,\dots,K_3\}}f_3(u_i),
    \end{align*} which implies $P_1\prec P_3$.

    \noindent\textbf{Case 2}: $k_1>k_2$. In this case, it must be $k_3=k_2$. Using a similar argument to Case 1, we have $P_1\prec P_3$.

    \noindent\textbf{Case 3}: $k_1=k_2$. In this case, $P_1$ and $P_3$ share at least the first $k_1$ edges. We have $k_3\ge k_1=k_2$. By $P_1\prec P_2$ and $P_2\prec P_3$, we have $$\max_{i\in\{k_1+1,\dots,K_1\}} f_1(u_i)<\max_{i\in\{k_1+1,\dots,K_2\}} f_2(u_i)<\max_{i\in\{k_1+1,\dots,K_3\}} f_3(u_i).$$ Since the inequality is strict, eliminating the common nodes of $P_1$ and $P_3$ does not affect the inequality, i.e., $$\max_{i\in\{k_3+1,\dots,K_1\}} f_1(u_i)<\max_{i\in\{k_3+1,\dots,K_3\}} f_3(u_i),$$ which implies $P_1\prec P_3$.
\end{proof}

Intuitively, Definition~\ref{def:bottleneck} compares two paths using the largest $f$-value among the path nodes (excluding the common part). These nodes are the bottlenecks of the paths. The A* algorithm finds these paths only if the minimum $f$-value in the $\text{OPEN}$ list exceeds the bottleneck value.

\begin{lemma}\label{lmm:pareto}
    In a smoothed A* instance, suppose $P$ is an intermediate path for some node $v$. With probability 1, for any other path $P^\prime$ from $s$ to $v$, we have $g(P)<g(P^\prime)$ or $P\prec P^\prime$.
\end{lemma}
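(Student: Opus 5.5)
The plan is to argue by contradiction using the classical ``some node of a better path is always on OPEN'' invariant, adapted to reopening. Fix the moment $t_0$ at which $P$ is the intermediate path of $v$. Let $P'$ be another $s$-$v$ path with $g(P')<g(P)$. With probability 1 the bottleneck values differ, so if $P\prec P'$ fails we have $P'\prec P$. Let $u_k$ be the last node of the longest common prefix, and write
\[
M=\max_{i>k} g(P(u_i))+h(u_i),\qquad M'=\max_{i>k} g(P'(u'_i))+h(u'_i),
\]
so that $M'<M$. If the divergent part of $P$ is empty, then $M=-\infty$ and $P\prec P'$ holds trivially. Hence we may assume $P$ has at least one node $u^*=u_j$ with $j>k$ attaining $M$.

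The first step records what it means for $P$ to be the tree path at $t_0$. Every node $u_i$ of $P$, including $u_k$, has at $t_0$ the value $g(u_i)=g(P(u_i))$, and its successor on $P$ has parent pointer $u_i$. Consequently each $u_i$ with $i<K$, and $u_K$ as the parent of $v$, was expanded at some time before $t_0$ carrying exactly the value $g(P(u_i))$. Since $g$-values only decrease, I take the last such expansion.

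\begin{itemize}
\item Let $\tau_k$ be the expansion of $u_k$ with value $g(P(u_k))$ that set $p(u_{k+1})=u_k$.
\item Let $\tau^*$ be the expansion of $u^*$ with value $g(P(u^*))$, so at $\tau^*$ its $f$-value equals $M$.
\end{itemize}

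I need to check that $\tau_k<\tau^*<t_0$. This holds because $u^*$ obtained its $P$-value only through the chain of relaxations starting at $u_k$ along $P$. At $\tau_k$ the same expansion of $u_k$ also relaxes the edge $(u_k,u'_{k+1})$. This yields $g(u'_{k+1})\le g(P'(u'_{k+1}))$ from then on.

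The second step is the invariant: at every time $t\ge\tau_k$, either $g(v)\le g(P')$, or some node $x$ of the divergent part of $P'$ lies in OPEN with $g(x)\le g(P'(x))$. To prove it, take the deepest node $x$ on $P'$ past $u_k$ whose current $g$ is at most its $P'$-prefix length; such a node exists because of $u'_{k+1}$ and monotonicity of $g$. If $x$ is not in OPEN, it is in CLOSED, so it was expanded with a $g$-value at most $g(P'(x))$, and the values only decreased since. That expansion relaxed the next edge of $P'$, contradicting deepness unless $x=v$. This step, and in particular the handling of reopening and of the timing of $\tau_k$, is where I expect the main care to be needed. Reopening in fact helps, because a reduced $g$ on a CLOSED node puts it back into OPEN.

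The final step applies the invariant at $t=\tau^*$. If some $x\neq v$ of $P'$ is in OPEN with $g(x)\le g(P'(x))$, then $f(x)\le g(P'(x))+h(x)\le M'<M$. But A* selected $u^*$, whose $f$-value at that moment is $M$, which contradicts the minimality rule in line 3. Otherwise $g(v)\le g(P')<g(P)$ already at $\tau^*<t_0$. Since $g(v)$ never increases, $v$ cannot have the value $g(P)$ at $t_0$, which contradicts $P$ being its intermediate path. Hence $P'\prec P$ is impossible, and with probability 1 we get $g(P)<g(P')$ or $P\prec P'$.
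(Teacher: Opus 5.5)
Your proof is correct and follows the paper's contradiction argument. The node $u^*$ attaining the bottleneck $M$ of $P$ is expanded with $f$-value $M$, while every divergent node of $P'$ has $f$-value below $M$, so $v$ must receive a value at most $g(P')$ before $u^*$ is expanded. Your OPEN-list invariant for the deepest node $x$ of $P'$ with $g(x)\le g(P'(x))$, together with $\tau_k<\tau^*$, is the rigorous justification of the paper's one-line claim that ``before $u_l$ is expanded, the path $P'$ will be discovered,'' and it handles reopening correctly.

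One remark: the only fact you use about $P$ being intermediate is that each $u_i$ was expanded carrying $g(P(u_i))$, in order along $P$. That holds even if ancestors on $P$ have since improved, so you can drop your stronger claim that every node of $P$ still carries its $P$-value at $t_0$.
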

\begin{proof}
    Proof by contradiction. Suppose there is a path $P^\prime$ that satisfies both $g(P^\prime)<g(P)$ and $P^\prime\prec P$. Following the notation in Definition~\ref{def:bottleneck}, let $l=\arg\max_{i\in\{k+1,\dots,K\}} g(P(u_i))+h(u_i)$. Since $P^\prime\prec P$, for any $i\in\{k+1,\dots,K^\prime\}$, we have $g(P^\prime(u_i^\prime))+h(u_i^\prime)<g(P(u_l))+h(u_l)$. Therefore, before $u_l$ is expanded, the path $P^\prime$ will be discovered. Since $g(P^\prime)<g(P)$, $P$ will not become an intermediate path. This leads to a contradiction.
\end{proof}
Lemma~\ref{lmm:pareto} indicates that there is a Pareto-optimality condition for intermediate paths. For any node $v$, let $P_1,P_2,\dots,P_N$ denote the sequence of intermediate paths of $v$, where $g(P_1)>g(P_2)>\dots>g(P_N)$. We have $P_1\prec P_2\prec\dots\prec P_N$. All paths are Pareto-minimum with respect to $g(\cdot)$ and the bottleneck comparison. See Figure~\ref{fig:pareto} for an illustration.
\begin{figure}
    \centering
    \includegraphics[width=0.6\linewidth]{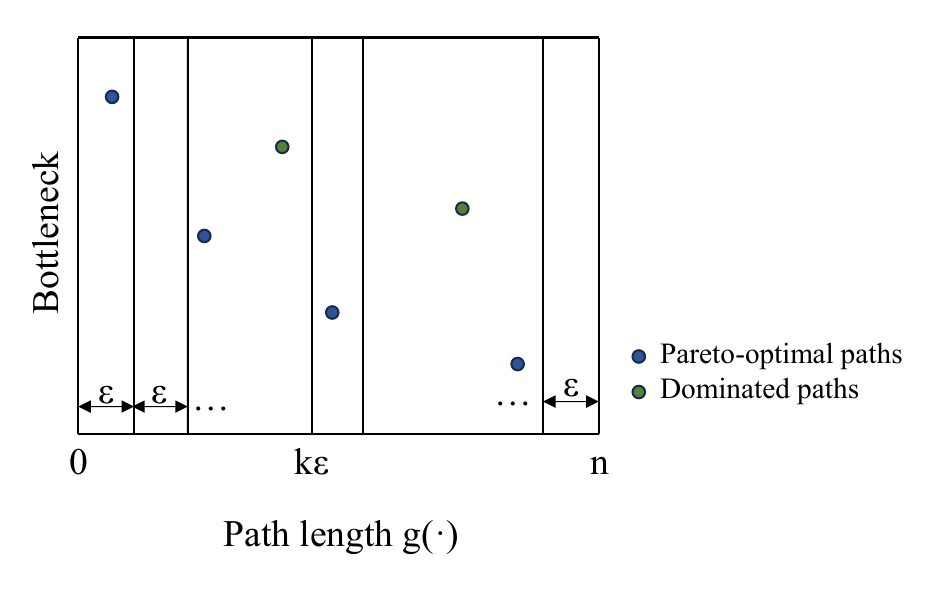}
    \caption{Pareto-optimal conditions for node expansions.}
    \label{fig:pareto}
\end{figure}

\subsection{Smoothed analysis of direct updates}
The main idea of proving Theorem~\ref{thm:main} is to bound the number of intermediate paths for each node. Summing for all nodes yields the upper bound of the total iterations of Algorithm~\ref{alg:a-star}.

Fix a node $v$. As previously discussed, we can bound the number of intermediate paths of $v$ by bounding the size of the Pareto set. Let $P_1,P_2,\dots,P_N$ denote the sequence of intermediate paths of $v$, where $g(P_1)>g(P_2)>\dots>g(P_N)$. Each pair of adjacent paths $(P_i,P_{i+1})$ denotes an update. If the last edges of $P_i$ and $P_{i+1}$ are identical, this is an indirect update. Otherwise, this is a direct update. Due to Lemma~\ref{lmm:indirect}, we only consider the number of direct updates. We use the ``loser gap'' technique to prove a smoothed upper bound, but with a few modifications to adapt our problem.

Recall that the edge weights are in $[0,1]$. For any path $P$, we have $g(P)\in [0,n]$. We partition $[0,n]$ into disjoint intervals $[0,\varepsilon), [\varepsilon,2\varepsilon),\dots,[n-\varepsilon,n]$, each with length $\varepsilon$, where $\varepsilon>0$ is a parameter to be determined later. Consider all simple paths from $s$ to $v$. Each path $P$ has a total weight $g(P)$, which lies in some interval $k\varepsilon<g(P)<(k+1)\varepsilon$. (Recall that with probability 1, the total weight of $g(P)$ will not be in a set of measure zero. Thus, it suffices to consider open intervals.) We will choose $\varepsilon$ so small that there is at most one path weight $g(P)$ that appears in each interval. See Figure~\ref{fig:pareto} for an example.

We define a few random variables.
\begin{itemize}
    \item Let $P_{v,e,k}$ for node $v\in V$, edge $e=(u,v)\in E$ and integer $k$ be a simple path from $s$ to $v$ that satisfies the following: (1) $P_{v,e,k}$ does not include edge $e$; (2) The length $g(P_{v,e,k})<k\varepsilon$; (3) Among all paths satisfying (1) and (2), $P_{v,e,k}$ is the path with the minimum bottleneck.
    \item Let $Q_{v,e,k}$ be a path from $s$ to $v$ that satisfies the following: (1) $Q_{v,e,k}$ includes edge $e$; (2) $Q_{v,e,k}\prec P_{v,e,k}$; (3) Among all paths satisfying (1) and (2), $Q_{v,e,k}$ is the path with the minimum length $g(\cdot)$.
    \item Let $G_{v,e,k}=g(Q_{v,e,k})$ be the length of $Q_{v,e,k}$.
\end{itemize}
\begin{lemma}\label{lmm:pareto-size}
    For an A* instance, suppose for each node $v$ and each interval $I_k=(k\varepsilon,(k+1)\varepsilon)$, (1) there is at most one path from $s$ to $v$, such that the path length lies in $I_k$; (2) there is no path length that lies on the terminals of any interval $I_k$. Let $U_{\mathbf{dir}}$ denote the number of direct updates. We have \begin{equation*}
        U_{\mathbf{dir}}\le \sum_{v\ne s}\sum_{k=0}^{\lfloor n/\varepsilon\rfloor-1}\sum_{(u,v)\in E}\indicator[G_{v,e,k}\in (k\varepsilon,(k+1)\varepsilon)].
    \end{equation*}
\end{lemma}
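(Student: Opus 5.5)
We prove the bound with a charging argument: every direct update is assigned to a triple $(v,e,k)$ for which the indicator $\indicator[G_{v,e,k}\in(k\varepsilon,(k+1)\varepsilon)]$ equals $1$, and the assignment is injective. Fix a node $v\ne s$. By Lemma~\ref{lmm:pareto}, the intermediate paths $P_1,\dots,P_N$ of $v$, ordered so that $g(P_1)>\dots>g(P_N)$, satisfy $P_1\prec\dots\prec P_N$. Consider a direct update $(P_i,P_{i+1})$. Its last edge $e=(u,v)$, the last edge of $P_{i+1}$, differs from the last edge of $P_i$. Let $k$ be the unique index with $g(P_{i+1})\in I_k$; it exists by hypothesis (2). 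We charge the update to $(v,e,k)$.

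\textbf{Step 1: the charged indicator equals $1$.} We show $Q_{v,e,k}=P_{i+1}$. By hypothesis (1), $P_{i+1}$ is the only path to $v$ whose length lies in $I_k$, so a different path of length below $(k+1)\varepsilon$ has length below $k\varepsilon$.

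Start with $P_{v,e,k}$. Since $P_i$ ends with an edge other than $e$ and the graph is simple, $P_i$ avoids $e$. We do not need to identify $P_{v,e,k}$ exactly. We only need some path $R$ that avoids $e$, has $g(R)<k\varepsilon$, and satisfies $R\preceq P_{v,e,k}$ with $P_{i+1}\prec R$.

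Now check that $P_{i+1}$ is the length-minimizer defining $Q_{v,e,k}$. Suppose some path $Q$ uses $e$, satisfies $Q\prec P_{v,e,k}$, and has $g(Q)<g(P_{i+1})$. Then $g(Q)<k\varepsilon$, because $P_{i+1}$ is alone in $I_k$. We want to contradict Lemma~\ref{lmm:pareto} applied to the intermediate path $P_{i+1}$. That lemma says $g(P_{i+1})<g(Q)$ or $P_{i+1}\prec Q$. The first alternative fails, so $P_{i+1}\prec Q$. This is not yet a contradiction, so more structure is needed.

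The better choice is to use the \emph{last} intermediate path $P_N$ in the comparison, or to redefine the charge as follows. $P_{v,e,k}$ is the bottleneck-minimal path avoiding $e$ with length below $k\varepsilon$. Every such path $R$ has length below $g(P_{i+1})$. Consider any later intermediate path $P_j$ with $j>i+1$ that avoids $e$. Lemma~\ref{lmm:pareto} applied to $P_{i+1}$ gives $P_{i+1}\prec R$ for every such $R$, because $g(R)<g(P_{i+1})$ rules out the other alternative. Hence $P_{i+1}\prec P_{v,e,k}$, so $P_{i+1}$ is a candidate for $Q_{v,e,k}$. This yields $G_{v,e,k}\le g(P_{i+1})<(k+1)\varepsilon$.

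\textbf{Step 2: the lower bound $G_{v,e,k}>k\varepsilon$.} We must exclude a path $Q$ that uses $e$, has $g(Q)<k\varepsilon$, and satisfies $Q\prec P_{v,e,k}$. The plan is to compare $Q$ with $P_{v,e,k}$ by applying Lemma~\ref{lmm:pareto} to the intermediate path that A* eventually holds among the short paths. That holder is either $Q$ itself or a path bottleneck-dominated by it. I expect this to be the main obstacle, since it requires showing that the bottleneck-minimum among short paths behaves like the next Pareto point. If this step fails, I would restrict the charge to the last intermediate path whose final edge is $e$ before a switch.

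\textbf{Step 3: injectivity of the charge.} For a fixed $v$, distinct updates $(P_i,P_{i+1})$ have distinct lengths $g(P_{i+1})$. By hypothesis (1), these lengths fall in distinct intervals $I_k$. Hence no two updates of $v$ receive the same $k$, and therefore no two receive the same triple $(v,e,k)$. Summing over $v$, $k$, and incoming edges $e$ gives the stated inequality.
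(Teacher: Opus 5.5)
\textbf{There is a genuine gap: your charge points the wrong way.} Because of this, the lower bound in Step 2 is not merely unproven; it is false for the triple you select.

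You send the direct update $(P_i,P_{i+1})$ to $(v,e,k)$, where $e$ is the last edge of the \emph{new} path and $g(P_{i+1})\in I_k$. Then $P_{v,e,k}$ must be a path strictly shorter than $P_{i+1}$ that avoids $e$. Two things go wrong.
\begin{enumerate}
\item Such a path need not exist, for instance when $P_{i+1}$ is the last intermediate path of $v$. Then $Q_{v,e,k}$ is undefined and Step 1 has nothing to apply to.
\item Suppose the next update $(P_{i+1},P_{i+2})$ is indirect. Then $P_{i+2}$ still ends in $e$, and $g(P_{i+2})<k\varepsilon$ by hypothesis (1). Whenever $g(P_{v,e,k})<g(P_{i+2})$ (e.g.\ $P_{v,e,k}$ is a later intermediate path), Lemma~\ref{lmm:pareto} at $P_{i+2}$ gives $P_{i+2}\prec P_{v,e,k}$. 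Hence $G_{v,e,k}\le g(P_{i+2})<k\varepsilon$, and your indicator is $0$.
\end{enumerate}

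Here is a concrete instance. Take edges $sa,sb,sc,sd,cb,da,av,bv$ with weights $0.5,0.4,0.1,0.05,0.1,0.01,0.5,0.4$. Set $h(s)=h(a)=0$, $h(b)=0.2$, $h(c)=0.6$, $h(d)=0.8$, $h(v)=10$, and $t=v$. Algorithm~\ref{alg:a-star} produces these intermediate paths of $v$:
\begin{itemize}
\item $P_1=s\to a\to v$;
\item $P_2=s\to b\to v$ (direct);
\item $P_3=s\to c\to b\to v$ (indirect, after $b$ is reopened);
\item $P_4=s\to d\to a\to v$ (direct).
\end{itemize}
Your scheme charges $(P_1,P_2)$ to $(v,(b,v),k)$. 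There $P_{v,e,k}=P_4$ and $Q_{v,e,k}=P_3$, whose length lies outside $I_k$. It charges $(P_3,P_4)$ to $(v,(a,v),k')$, but no path avoiding $(a,v)$ is shorter than $P_4$, so $P_{v,e,k'}$ does not exist. Neither direct update lands on an indicator equal to $1$.

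\textbf{The repair.} It is the fallback you mention at the end of Step 2, and it is exactly the paper's charge. Send $(P_i,P_{i+1})$ to $(v,e,k)$ with $e$ the last edge of the \emph{old} path $P_i$ and $g(P_i)\in I_k$. Now the new path plays the role of $P_{v,e,k}$ and the old path plays the role of $Q_{v,e,k}$. Your Step 1 remark that $P_i$ avoids $e$ is a symptom of the reversal: under your indexing, $P_i$ is too long to compete for $P_{v,e,k}$.

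With the corrected charge:
\begin{itemize}
\item $P_{i+1}$ avoids $e$ and $g(P_{i+1})<k\varepsilon$. So $P_{v,e,k}$ exists and $P_{v,e,k}\preceq P_{i+1}$; the paper asserts equality.
\item Since $g(P_{v,e,k})<g(P_i)$, Lemma~\ref{lmm:pareto} at $P_i$ gives $P_i\prec P_{v,e,k}$. Hence $G_{v,e,k}\le g(P_i)<(k+1)\varepsilon$.
\item For the lower bound, take any $Q$ through $e$ with $g(Q)<k\varepsilon$ and $Q\prec P_{v,e,k}\preceq P_{i+1}$. Lemma~\ref{lmm:pareto} at $P_i$ and at $P_{i+1}$ forces $P_i\prec Q\prec P_{i+1}$ and $g(P_{i+1})<g(Q)<g(P_i)$. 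So $Q$ would sit strictly between two consecutive intermediate paths.
\item The paper rules this out by asserting that $P_i$ and $P_{i+1}$ are adjacent Pareto-minimum paths, which gives $Q_{v,e,k}=P_i$. Note that this adjacency is an extra A*-specific fact, not a consequence of Lemma~\ref{lmm:pareto} alone.
\end{itemize}
Your injectivity argument in Step 3 then goes through verbatim with $g(P_i)$ in place of $g(P_{i+1})$.
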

\begin{proof}
    The proof idea is to show that, for each direct update, there exists a unique term $G_{v,e,k}$ such that $G_{v,e,k}\in (k\varepsilon,(k+1)\varepsilon)$ holds.

    Fix a Pareto-optimal intermediate path $P$ from $s$ to $v$ where the last edge is $e=(v^\prime\to v)$. Suppose $g(P)\in (k\varepsilon,(k+1)\varepsilon)$. Since there is at most one path in each interval, the length of any path from $s$ to $v$ that is smaller than $g(P)$ does not exceed $k\varepsilon$. If there is a direct update on $P$, it must be another Pareto-optimal intermediate path $P^\prime$ such that (1) $P^\prime$ does not include edge $e$; (2) $g(P^\prime)<k\varepsilon$; (3) the bottleneck of $P^\prime$ is as small as possible. By definition, we find that $P^\prime$ is exactly $P_{v,e,k}$.

    Moreover, since $P$ and $P^\prime$ are two adjacent Pareto-minimum paths, $P$ must satisfies the following: (1) $P\prec P^\prime$; (2) $P$ has the minimum length among all paths that contain edge $e$ and have a smaller bottleneck than $P^\prime$. By definition, we find that $P$ is exactly $Q_{v,e,k}$. Thus, we have $G_{v,e,k}\in (k\varepsilon,(k+1)\varepsilon)$. Since each interval contains at most one intermediate path, we charge each Pareto-minimum path to a unique $G_{v,e,k}$, which leads to the desired bound.
\end{proof}

\begin{lemma}\label{lmm:interval-density}
    For a $\kappa$-smoothed A* instance, for any $v,e,k$, we have $$\prob[G_{v,e,k}\in (k\varepsilon,(k+1)\varepsilon)]\le \varepsilon\kappa.$$
\end{lemma}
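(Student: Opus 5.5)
The plan is the standard ``fix everything but one weight'' argument from the winner/loser gap technique: condition on all edge weights except $w_e$, where $e=(u,v)$. I will show that, under this conditioning, $G_{v,e,k}$ equals a fixed constant plus $w_e$. Since the density of $w_e$ is bounded by $\kappa$ and $w_e$ is independent of the other weights, the probability that $G_{v,e,k}$ lands in an interval of length $\varepsilon$ is then at most $\varepsilon\kappa$. Integrating over the conditioned weights gives the lemma.

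\textbf{Step 1: $P_{v,e,k}$ does not depend on $w_e$.}
By definition, $P_{v,e,k}$ ranges only over simple $s$--$v$ paths avoiding $e$. Both the length constraint $g(\cdot)<k\varepsilon$ and the bottleneck ranking among these paths involve only weights of edges other than $e$. Comparing two paths avoiding $e$ under $\prec$ uses prefixes of those paths, which also avoid $e$. Hence, once the other weights are fixed, $P_{v,e,k}$ is a fixed path. If no such path exists, $Q_{v,e,k}$ is undefined; I treat the indicator as $0$, and the bound holds trivially.

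\textbf{Step 2: the set of candidates for $Q_{v,e,k}$ does not depend on $w_e$.}
This is the step needing the most care. A simple path from $s$ to $v$ visits $v$ only at its end, so any such path containing $e=(u,v)$ must have $e$ as its last edge. Write such a path as $Q=R\circ e$, where $R$ is an $s$--$u$ path not containing $v$. In Definition~\ref{def:bottleneck}, the maximum runs over the intermediate nodes $u_{k+1},\dots,u_K$ only, and excludes the endpoint $v$. Each prefix $Q(u_i)$ is therefore a prefix of $R$ and does not use $e$. The common prefix of $Q$ with $P_{v,e,k}$ also cannot contain $e$. Consequently, the truth value of the condition $Q\prec P_{v,e,k}$ is determined by the weights other than $w_e$. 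So the candidate set in the definition of $Q_{v,e,k}$ is fixed under the conditioning.

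\textbf{Step 3: $G_{v,e,k}$ is a constant plus $w_e$.}
Every candidate $Q$ has length $g(Q)=g(R)+w_e$, so all candidates share the same additive term $w_e$. The minimizer of $g$ is therefore the minimizer of $g(R)$, and its identity is fixed given the other weights; ties occur with probability $0$. Thus $G_{v,e,k}=c+w_e$, where $c=\min g(R)$ is determined by the other weights. It follows that
\[
\prob\big[G_{v,e,k}\in(k\varepsilon,(k+1)\varepsilon)\,\big|\,(w_{e'})_{e'\neq e}\big]
=\prob\big[w_e\in(k\varepsilon-c,(k+1)\varepsilon-c)\big]\le \kappa\varepsilon .
\]
Taking expectation over the other weights finishes the proof.

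The main obstacle is Step 2: one must verify that the bottleneck comparison never ``sees'' $w_e$. This relies on two facts: the maximum in Definition~\ref{def:bottleneck} excludes the terminal node $v$, and simplicity forces $e$ to be the last edge.
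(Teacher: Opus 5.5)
Your proposal is correct and follows the paper's proof essentially step for step. You condition on every weight except $w_e$. Then $P_{v,e,k}$ and the candidate set for $Q_{v,e,k}$ are fixed, because the bottleneck comparison never involves the last edge, so $G_{v,e,k}$ is a constant plus $w_e$ and the density bound gives $\varepsilon\kappa$. Your Step 2 spells out what the paper asserts in one line: simplicity forces $e$ to be the last edge, and the maximum in Definition~\ref{def:bottleneck} excludes the endpoint $v$.
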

\begin{proof}
    We use the principle of deferred decisions to prove this lemma. Fix the randomness of all edges except $w_e$, the weight of $e$. Now, since $P_{v,e,k}$ does not contain edge $e$, $P_{v,e,k}$ is deterministic. Consider the path length of $Q_{v,e,k}$. $Q_{v,e,k}$ satisfies three conditions: (1) $Q_{v,e,k}$ contains edge $e$; (2) $Q_{v,e,k}\prec P_{v,e,k}$; (3) the path length is minimal. Since the definition of bottleneck comparison does not depend on the last edge of the path, all candidate paths of $Q_{v,e,k}$ that satisfy (1) and (2) are deterministic. Moreover, the lengths of all these paths can be represented as the form of $\text{a deterministic term}+w_e$. Therefore, the minimum path $Q_{v,e,k}$ is also deterministic, independent of the value of $w_e$.

    Therefore, conditioning on all the randomness of $w_e$, the probability of $G_{v,e,k}\in (k\varepsilon,(k+1)\varepsilon)$ is equivalent to the probability that $w_e$ lies in an interval of length $\varepsilon$. Since $w_e$ is $\kappa$-bounded, we have $\prob[G_{v,e,k}\in (k\varepsilon,(k+1)\varepsilon)]\le \varepsilon\kappa$.
\end{proof}

\subsection{Putting things together}
Now, we are ready to prove Theorem~\ref{thm:main}.
\begin{proof}[Proof of Theorem~\ref{thm:main}]
    Let $E$ denote the event that there is an interval $I_k=(k\varepsilon,(k+1)\varepsilon)$, such that there are two or more paths from $s$ to some node $v$, where the path lengths both lie in $I_k$. Suppose there are $M$ paths from $s$ to any other node $v$ in total. We have \begin{align*}
        M\le \sum_{i=1}^{n-1} \binom{n-1}{i}i!\le (2n)^{n+1}.
    \end{align*} By a union bound over all intervals and all pairs of paths, we have $\prob[E]\le \binom{M}{2}\cdot\frac{n}{\varepsilon}\cdot (\varepsilon\kappa)^2=(2n)^{2n+3}\varepsilon\kappa^2$, since for any two different paths, there is at least one non-common edge, which leads to the fact that both paths lie in an interval of length $\varepsilon$ with probability at most $(\kappa\varepsilon)^2$. Note that $\varepsilon$ can be arbitrarily small, the probability of $E$ can be neglected (e.g., by setting $\varepsilon<\frac{1}{\kappa^2(2n)^{4n+4}}$).
    
    Let $U_{\mathbf{dir}}$ and $U_{\mathbf{ind}}$ denote the number of direct and indirect updates, respectively. The total number of iterations of A* is equivalent to the total number of updates. Therefore, the total iteration number $U$ satisfies $$\E[U]=\E[U_{\mathbf{dir}}+U_{\mathbf{ind}}]\le n\cdot \E[U_{\mathbf{dir}}]$$ by Lemma~\ref{lmm:indirect}. Now, we apply the smoothed analysis of the Pareto-minimum paths. If $E$ does not happen, we can simply apply Lemma~\ref{lmm:pareto-size}. However, if $E$ happens, the number of updates can be at most $M$. Concretely,
    \begin{align*}
        \E[U]&\le n\cdot \E[U_{\mathbf{dir}}]\\
        &\le n\cdot\left(\sum_{v\ne s}\sum_{k=0}^{\lfloor n/\varepsilon\rfloor-1}\sum_{(u,v)\in E}\indicator[G_{v,e,k}\in (k\varepsilon,(k+1)\varepsilon)]+M\cdot \prob[E]\right)\\
        &\le n\cdot \left(\sum_{v\ne s}\sum_{k=0}^{\lfloor n/\varepsilon\rfloor-1}\sum_{(u,v)\in E}\varepsilon\kappa+M\cdot \frac{1}{(2n)^{2n+1}}\right)\\
        &\le n\cdot \left(\sum_{v\ne s}\sum_{(u,v)\in E}n\kappa+o(1/n)\right)\\
        &\le O(n^2m\kappa),
    \end{align*} by Lemma~\ref{lmm:pareto-size} and Lemma~\ref{lmm:interval-density}.
\end{proof}

\section{Connections to Negative-Weight Dijkstra}
It is well-known that inconsistent A* is equivalent to the Dijkstra algorithm on negative-weight graphs without negative cycles~\cite{martelli77on}. Indeed, the original heuristic function $h(\cdot)$ can be seen as a potential function. For each edge $(u,v)\in E$, we can transform the edge weight into $w^\prime_{(u,v)}=w_{(u,v)}+h(v)-h(u)$. Recall that the Dijkstra algorithm maintains a path length vector $\text{dist}:V\to\real$ that represents the current path lengths from the start node $s$ to any node. In each iteration, the algorithm finds an active node with the minimal path length, and relaxes adjacent nodes. Due to the existence of negative-weight edges, nodes may be updated for multiple times, which leads to an exponential worst-case complexity.

\begin{algorithm}[htb]
\caption{Negative-weight Dijkstra's algorithm with reopening.}
\label{alg:dijkstra}
\begin{algorithmic}[1] 
\Require A directed weighted graph $G=(V,E)$ with $|V|=n$ nodes and $|E|=m$ edges, where each edge $e=(u,v)\in E$ is associated with a real weight $w_e\in [-1,1]$; The start node $s\in V$.
\Ensure The shortest paths from $s$ to other nodes.
\State Let $\text{OPEN}=\{s\},\text{CLOSED}=\emptyset$ and $\text{dist}(s)=0$;
\While {$\text{OPEN}\ne\emptyset$}
    \State Let $v\leftarrow \arg\min_{v\in\text{OPEN}}\text{dist}(v)$;
    \For {each node $v^\prime$ s.t. $(v,v^\prime)\in E$}
        \State Let $d^\prime\leftarrow \text{dist}(v)+w_{(v,v^\prime)}$;
        \If {$v^\prime\notin\text{OPEN}\cup\text{CLOSED}$}
            \State Let $\text{dist}(v^\prime)\leftarrow d^\prime, p(v^\prime)\leftarrow v$, and $\text{OPEN}\leftarrow\text{OPEN}\cup \{v^\prime\}$;
        \ElsIf {$v^\prime\in\text{OPEN}$ and $d^\prime<\text{dist}(v^\prime)$}
            \State Let $\text{dist}(v^\prime)\leftarrow d^\prime, p(v^\prime)\leftarrow v$;
        \ElsIf {$v^\prime\in\text{CLOSED}$ and $d^\prime<\text{dist}(v^\prime)$}
            \State Let $\text{dist}(v^\prime)\leftarrow d^\prime, p(v^\prime)\leftarrow v$;
            \State Move $v^\prime$ from $\text{CLOSED}$ to $\text{OPEN}$;
        \EndIf
    \EndFor
    \State Move $v$ from $\text{OPEN}$ to $\text{CLOSED}$;
\EndWhile
\State \Return $\text{dist}$;
\end{algorithmic} 
\end{algorithm}
Now, we study the smoothed complexity of negative-weight Dijkstra. Again, given a graph with $n$ nodes and $m$ edges, we assume the weight of each edge $e$ is independently sampled from a distribution with density function $p_e:[-1,1]\to[0,\kappa]$. We aim to find the shortest path lengths from a start node $s$ to other nodes. Such instances are called $\kappa$-smoothed.
\begin{theorem}
    Given a $\kappa$-smoothed shortest path instance where the instance has no negative cycle with probability 1, the expected number of iterations of Algorithm~\ref{alg:dijkstra} is bounded from above by $O(n^2m\kappa)$.
\end{theorem}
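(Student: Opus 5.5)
The plan is to mirror the proof of Theorem~\ref{thm:main}, re-running the argument directly on Algorithm~\ref{alg:dijkstra} with $h\equiv 0$. The potential transformation $w'_{(u,v)}=w_{(u,v)}+h(v)-h(u)$ cannot be used as a black box. Our model perturbs the original weights while $h$ is arbitrary, and a negative-weight instance does not come with a potential that makes all weights lie in $[0,1]$. With $h\equiv 0$, Algorithm~\ref{alg:dijkstra} coincides with Algorithm~\ref{alg:a-star} run to convergence (lines 4--6 removed), with $g$ replaced by $\text{dist}$. I will check that each lemma of Section~3 only used (i) independence and $\kappa$-boundedness of the weights, and (ii) the fact that path lengths lie in a bounded interval. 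Nonnegativity of the weights was never used.

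\textbf{Structural lemmas.} Lemma~\ref{lmm:indirect} is purely combinatorial: it uses only that $T$ is a tree rooted at $s$ and that each update strictly decreases a label. Both properties still hold because there is no negative cycle, so relaxations never make the parent pointers cyclic. Hence $U_{\mathbf{ind}}\le (n-1)U_{\mathbf{dir}}$ carries over. For Lemma~\ref{lmm:pareto}, I use Definition~\ref{def:bottleneck} with $f(u)=\text{dist}$-value of the prefix, i.e.\ the bottleneck of a path is the maximum prefix length over its non-shared nodes. The proof again goes through verbatim: if some $P'$ had both a smaller length and a smaller bottleneck, every node of $P'$ would be extracted before the bottleneck node of $P$, so $P'$ would have been found first. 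The transitivity lemma also holds unchanged, since it never used signs.

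\textbf{Probabilistic part.} Simple paths now have lengths in $[-n,n]$, so I partition $[-n,n]$ into $2n/\varepsilon$ intervals of length $\varepsilon$. This only doubles the number of $k$-indices. Lemma~\ref{lmm:pareto-size} holds with $k$ ranging over these indices. For Lemma~\ref{lmm:interval-density}, I use deferred decisions on $w_e$ for the last edge $e=(u,v)$. The candidate set for $Q_{v,e,k}$ and its bottleneck do not depend on $w_e$, because bottlenecks exclude the endpoint $v$ and $e$ appears only as the last edge of a simple path. Therefore $G_{v,e,k}$ equals a deterministic quantity plus $w_e$, which gives probability at most $\varepsilon\kappa$. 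Summing, $\E[U_{\mathbf{dir}}]\le \sum_v\sum_{e}(2n/\varepsilon)\varepsilon\kappa+M\prob[E]=O(nm\kappa)$. The bad event $E$ is handled by the same union bound, with $\varepsilon\to0$. Multiplying by $n$ via Lemma~\ref{lmm:indirect} gives $O(n^2m\kappa)$.

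\textbf{Main obstacle.} The delicate point is the worst-case count on the bad event $E$. In the A* proof, the number of updates is bounded by the number $M$ of simple paths. With negative edges, I must confirm that every intermediate path is simple. A non-simple intermediate walk would contain a cycle that strictly decreased a label, i.e.\ a negative cycle, and this occurs with probability $0$ by hypothesis. Once simplicity holds, each update yields a new simple path with strictly smaller length to the same node, so there are at most $M$ updates per node. This gives the bound $nM$ on $E$, which is negligible after choosing $\varepsilon$ small. The second thing to check carefully is that the extraction-order argument in Lemma~\ref{lmm:pareto} uses only the min-key rule, and never monotonicity of extracted keys. Monotonicity fails with negative weights, so the proof must not rely on it.
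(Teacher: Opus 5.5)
Your proposal is correct and follows the paper's route: you rerun the proof of Theorem~\ref{thm:main} with $h\equiv 0$, so the bottleneck becomes the maximum prefix length over non-shared nodes, which is exactly the modified comparison the paper uses. Your extra checks fill in details the paper leaves implicit: partitioning $[-n,n]$ instead of $[0,n]$, using the no-negative-cycle hypothesis to get acyclic parent pointers and simple label-producing paths, and avoiding any reliance on monotone extraction keys.
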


The proof of this theorem is essentially identical to Theorem~\ref{thm:main}. The only difference is in defining a slightly different version of bottleneck comparison. Given two different paths $P$ and $P^\prime$ from $s$ to some node $v$, where $P=s\to u_1\to \dots \to u_k\to u_{k+1} \to \dots\to  u_K\to v$, and $P^\prime=s\to u_1\to \dots \to u_k\to u^\prime_{k+1} \to \dots\to u^\prime_{K^\prime}\to v$, with the first $k$ edges being the same ($k$ could be 0), we say $P\prec P^\prime$ if \begin{equation*}
    \max_{i\in \{k+1,\dots,K\}} w(P(u_i))<\max_{i\in\{k+1,\dots,K^\prime\}} w(P^\prime(u^\prime_i)),
\end{equation*} where $P(u_i)$ is the prefix of $P$ from $s$ to $u_i$, $P^\prime(u^\prime_i)$ is the prefix of $P^\prime$ from $s$ to $u^\prime_i$, and $w(\cdot)$ is the total length of a path. The rest of the proof follows the same argument as Theorem~\ref{thm:main}.

\section{Numerical Experiments}
We perform numerical experiments on the inconsistent A* instances constructed by \citet{martelli77on} to show that the exponential-time behavior of inconsistent A* is highly sensitive to random perturbations, thus verifying our smoothed analysis result.

\begin{figure}
    \centering
    \includegraphics[width=0.6\linewidth]{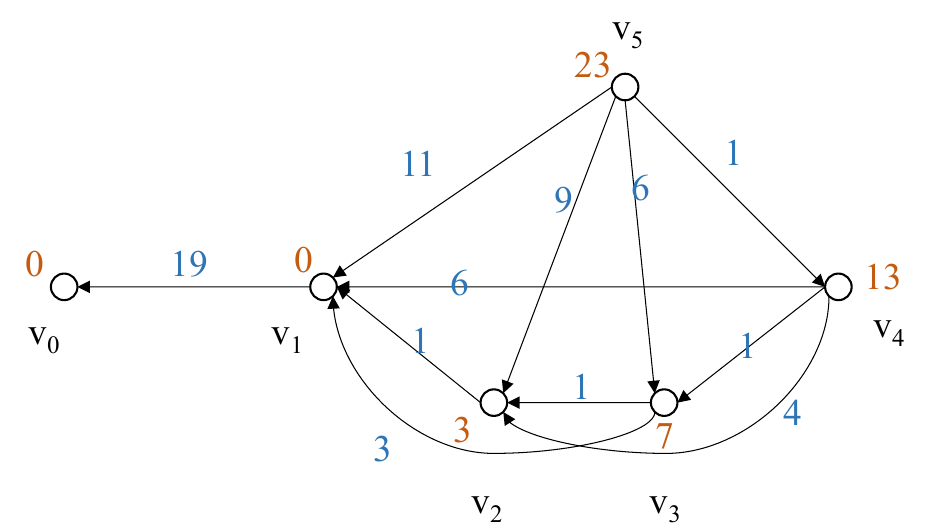}
    \caption{Martelli's instance for inconsistent A*. Blue integers are edge weights, and brown integers are heuristic values.}
    \label{fig:martelli}
\end{figure}

\noindent\textbf{Martelli's instances with exponential worst-case complexities}. Martelli's family of A* instances is based on a construction of negative-weight shortest path instances. Let $V=\{v_0,v_1,\dots,v_n\}$ denote the nodes of a directed graph, where $s=v_n$ is the start node and $t=v_0$ is the goal node. For any $1\le i<j\le n$, there is a directed edge from $v_j$ to $v_i$. We first define a negative-weight graph, where \begin{align*}
    w^\prime_{2,1}&=-2,\\
    w^\prime_{{i+1},1}&=w^\prime_{i,1}-(2^{i-2}+1),\,\forall\, 2\le i<n,\\
    w^\prime_{i,{j+1}}&=w^\prime_{i,j}+1,\,\forall\,1<j+1<i\le n.
\end{align*} We also add an extra edge, $$w^\prime_{1,0}=-\sum_{i=1}^{n-1}w^\prime_{{i+1},i}.$$
Then, we can apply the equivalence between negative-weight Dijkstra and A* to construct an A* instance. The heuristic values of nodes are defined by \begin{align*}
    h(v_0)&=h(v_1)=0,\\
    h(v_i)&=h(v_{i-1})+(2^{i-2}+2),\,\forall\,1<i\le n.
\end{align*} The weight of each edge $(i,j)$ is $$w_{i,j}=w^\prime_{i,j}+h(v_i)-h(v_j).$$ Figure~\ref{fig:martelli} gives the example of Martelli's construction for $n=5$.

\noindent\textbf{Random perturbations on Martelli's instances}. In Martelli's instance, the A* algorithm will enumerate every path from $s$ to other nodes, leading to $\Omega(2^n)$ time. We empirically show that with a little random perturbation, the total number of iterations greatly decreases.

The result is illustrated in Figure~\ref{fig:result}. We evaluate the A* algorithm on Martelli's instance for $n=15,20$ and $25$. (Larger instances are intractable due to the exponential time complexity.) We gradually increase the random perturbation from $\kappa = \infty$ (no perturbation). Each edge is independently perturbed by a uniform random variable supported on $[-\gamma,\gamma]$ for some $\gamma$ (and thus $1/\kappa$ is $\gamma$ divided by the maximum weight value). We clip the weights below zero such that all weights are non-negative. For each $\kappa$, we repeat the generation of the instances 100 times. Figure~\ref{fig:result} gives the average and the standard deviation of the total iteration number. Note that both axes are logarithmic. It is easy to notice that if there is no random perturbation, the iteration number is about $2^n$. As we slightly increase the perturbation, the iteration number greatly reduces. This verifies our theoretical results and indicates that the exponential behavior of inconsistent A* for worst-case instances is highly fragile.

\begin{figure}
    \centering
    \includegraphics[width=0.33\linewidth]{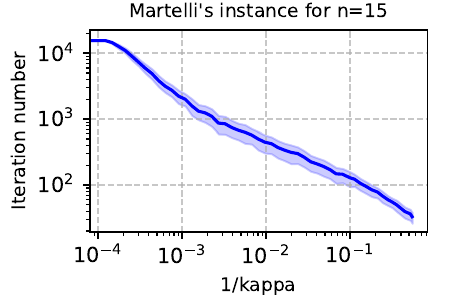}\includegraphics[width=0.33\linewidth]{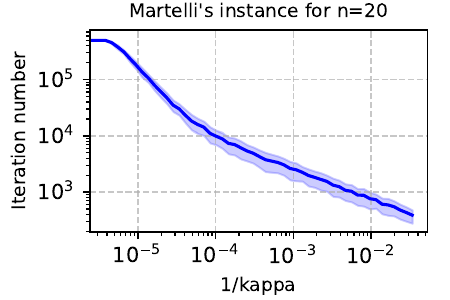}\includegraphics[width=0.33\linewidth]{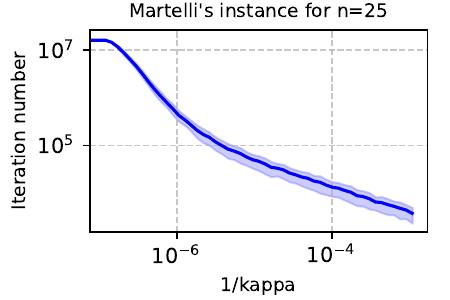}
    \caption{The iteration number of inconsistent A* on Martelli's instance.}
    \label{fig:result}
\end{figure}

\section{Conclusion}
In this paper, we study the smoothed complexity of the A* algorithm with inconsistent heuristics. We show that the exponential running time behavior of inconsistent A* in the worst case is highly fragile. With a little random perturbation, inconsistent A* finds the shortest path in polynomial time. Our result also works for the Dijkstra algorithm on negative-weight graphs. Numerical experiments verify our theoretical analysis.

There are two directions of future research. On the one hand, our bound is still loose compared with practice. Note that if the edge weights are integers in $[0,W]$, the time complexity of inconsistent A* is only $O(n^2W)$. We conjecture that a corresponding smoothed complexity of $O(n^2\kappa)$ is possible. On the other hand, it would be interesting to prove lower bounds for smoothed A* to see what kind of instances are really hard for inconsistent heuristics.

\bibliographystyle{unsrtnat}
\bibliography{edabib}

\appendix

\end{document}